\newtheorem{theorem}{Theorem}
\title{The Variations of SIkJalpha Model for COVID-19 Forecasting and Scenario Projections}
\author{Ajitesh Srivastava}
\affil{University of Southern California\\ajiteshs@usc.edu}
\date{}
\begin{document}

\maketitle

\begin{abstract}
    We proposed the SIkJalpha model at the beginning of the COVID-19 pandemic (early 2020). Since then, as the pandemic evolved, more complexities were added to capture crucial factors and variables that can assist with projecting desired future scenarios. Throughout the pandemic, multi-model collaborative efforts have been organized to predict short-term outcomes (cases, deaths, and hospitalizations) of COVID-19 and long-term scenario projections. We have been participating in five such efforts. This paper presents the evolution of the SIkJalpha model and its many versions that have been used to submit to these collaborative efforts since the beginning of the pandemic. Specifically, we show that the SIkJalpha model is an approximation of a class of epidemiological models. We demonstrate how the model can be used to incorporate various complexities, including under-reporting, multiple variants, waning of immunity, and contact rates, and to generate probabilistic outputs.
\end{abstract}

\section{Introduction}

Since the beginning of the COVID-19 pandemic, many models have been proposed to predict the trajectory of the outcomes including cases, deaths, and hospitalizations. To standardize the prediction tasks and communicate the results to stakeholders, several collaborative efforts (hubs) were initiated. These include US/CDC~\cite{lab_covid-19_2020}, Europe/ECDC~\cite{europe_forecast_hub} and Germany/Poland~\cite{bracher_assembling_nodate} forecast hubs, and the US and the European scenario modeling hubs~\cite{SMH,ESMH}. The forecast hubs ask for short-term predictions (up to 4 weeks ahead), while scenario modeling hubs ask for long-term projections (months to years ahead) under various scenarios. A collection of teams participate in these efforts, each one independently producing predictions and submitting regularly -- weekly for forecast hubs, and once in 1-2 months for scenario modeling hubs. 

During the early phase of the pandemic, we proposed the SIkJalpha model~\cite{srivastava_learning_2020,srivastava_fast_2020} for cases and deaths forecasts (see Section~\ref{sec:basic}). The approach has evolved with time, depending on dataset availability and new factors that we believed to have a significant impact. Further, for scenario modeling hubs, scenarios are defined on a certain decision or a variable of uncertainty. To account for the given scenario, that variable also needed to be included in the model. The modeling philosophy has been attempting to design the simplest model that can account for all such variables. 
We also avoid estimations where all parameters are estimated simultaneously. In prior work, we have shown that simultaneously estimating parameters of a simple non-linear version of our model can still lead to overfitting  
~\cite{srivastava_data-driven_2020}. Instead, we split the estimation problem into multiple sub-problems, most of which are solved by linear regressions. Some parameters are borrowed from the literature and some are provided by the specifying scenario. 

A key design feature of our approach is that all regions that a version of the model is applied to receive the same treatment. There is no region-specific manual tuning of the code. As an example, reporting rates differ significantly between the states in the US -- cases in Florida may be reported only once in two weeks, while some other states may report on all weekdays or certain days of the week. Yet, the same data pre-processing code applies to both. This is done to improve the generalizability of the approach, i.e., the same code can be applied to other regions in the world, as long as the data inputs the version of the model needs are available for those regions. A second key feature of the approach used for forecast hub submissions is that the forecasts are generated without any human interventions. Occasionally, hyperparameter-tuning is performed, when debugging or when new variables are to be included creating a new version of the model. The relevant hyper-parameters are presented in Table~\ref{tab:params}. The script is automated that generates all forecasts for submission. As a result, sometimes, data anomalies can cause unusual forecasts. However, we take the automated approach for scalability. The philosophy is that the approach should scale to a large number of regions if compute power and memory are available. Reliance on human assessment would limit the scalability. Further, the method should stand on its own without relying on the developer, and thus the open-source code can be used by anyone, including a non-expert. While these are desirable design features, it is crucial for the model forecasts to be accurate to be meaningful. Several works have evaluated our model demonstrating that it has performed well (see Section~\ref{sec:eval}).

\subsection{Contributions}

Our key contributions are the following. (i) We prove that the basic SIkJalpha model can approximate a class of epidemiological models (Theorem~\ref{thm:sir_special}). (ii) We present various versions of the SIkJalpha model that have been used over time over different forecast hubs (Section~\ref{sec:methodology}).(iii) We present the implementation details including data sources, parameter estimation methods, runtimes, and uncertainty quantification (Section~\ref{sec:uncertainty} and~\ref{sec:real-time}). Note that, over time, many modeling decisions were changed and new variables were included. The details of the impact of each modeling decision are beyond the scope of this paper. Instead, we focus on the breadth of modeling decisions. Evaluations of our models are publicly available on our dashboard as well as on others'  dashboards, articles, and reports (see Section~\ref{sec:eval}).

\subsection{Related Work}

During the COVID-19 pandemic, many models have been proposed for short-term and long-term forecasts. For the models that participated in the Forecast and Scenario Modeling Hubs, one way to categorize the model is the following.

\noindent \textbf{Mechanistic models} simulate the spread of COVID-19 through populations by explicitly representing the transmission dynamics of the virus. These models typically take into account factors such as the rate of contact between individuals, the probability of infection given contact, and the duration of the infectious period~\cite{suchoski2022gpu,linas2022projecting}.
These also include ``meta-population'' models that incorporate various demographics, spatial resolutions, and interactions between them~\cite{davis2021cryptic,lemaitre2021scenario}. Agent-based models may be considered a special case where each individual is modeled in the network of contacts~\cite{chen2021epidemiological}.

\noindent \textbf{Statistical models} use historical data to predict future COVID-19 outcomes. These include non-parametric spline-based models~\cite{wang2022nonparametric}, and weighted regression models that estimate growth rates~\cite{castro2021coffee}. A statistical model can be more sophisticated than mechanistic models but they are also more dependent on the quality and completeness of the data on which they are trained. 

\noindent \textbf{Machine Learning models} are also data-driven, but they can be distinguished from statistical models based on the fact they do not necessarily require an explicit model of the relationship between the variables. Most such models used some form of deep learning~\cite{zheng2021hierst,rodriguez2021deepcovid} to ingest data from multiple sources to predict future cases and mortality. 

\noindent \textbf{Ensemble models} have also been used to generate forecasts and projections based on multiple model outputs. Besides the ``Hubs'' that generate ensembles of submitted models, some submitting models also generate an ensemble of their own models. For instance, one submission combined auto-regressive models, Long short-term memory (LSTM) models, ensemble Kalman filter, and an SEIR model~\cite{adiga2021all}.

Scenario projections are more naturally performed by mechanistic models as the desired real-world variable can be directly incorporated. However, there have been Machine Learning-based models for scenario projections as well~\cite{UNCChierbin}.
Our approach is a combination of Mechanistic and Statistical category, as we model susceptibility as a mechanistic process. Given the susceptibility, the estimation of transmission parameters is done using a linear regression over new infections.

\section{Methodology}
\label{sec:methodology}
\subsection{The Basic SIkJalpha Model}
\label{sec:basic}
The key idea behind the SIkJalpha model~\cite{srivastava_fast_2020} is to approximate the disease dynamics as a discrete heterogeneous rate model. For instance, suppose the new infections are created by the following true dynamics:

\begin{equation}\label{eqn:main_infec}
\Delta I(t) = \frac{S(t)}{N}\mathcal{R}_0\sum_\tau \alpha(\tau)\Delta I(t-\tau) = \frac{S(t)}{N}\sum_\tau \alpha'(\tau)\Delta I(t-\tau)
\end{equation}
Here $S(t)$ is the number of susceptible individuals, $N$ is the population, and $\mathcal{R}_0$ is the reproduction number~\cite{dietz1993estimation} that represents the expected number of infections created by each infected individual in a fully susceptible population. The probability distribution of the serial interval, i.e., the delay between an infected individual infecting another, is given by $\alpha(\tau)$. As a result, someone who was infected at time $t-\tau$ on average will infect $\alpha'(\tau) = \mathcal{R}_0\alpha(\tau)$ individuals at time $t$.

Note that Equation~\ref{eqn:main_infec} represents the new infections rather than active infections which is common in variations of SIR models~\cite{allen1994some}. This choice has been made as new infections (daily/weekly) are more easily observed (adjusting for reporting rates and noise) compared to active infections. It can be shown that by proper choice of $\alpha(\tau)$ we can mimic the SIR model.

\begin{theorem}\label{thm:sir_special}
The SIR model is a special case of SIkJalpha model.
\end{theorem}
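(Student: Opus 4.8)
The plan is to exhibit, for an arbitrary SIR model, a choice of serial-interval distribution $\alpha(\tau)$ and reproduction number $\mathcal{R}_0$ for which the incidence recursion of Equation~\ref{eqn:main_infec} reproduces the SIR trajectory exactly. First I would fix the (discrete-time) SIR dynamics in the form
\[
I(t) = (1-\gamma)\,I(t-1) + \Delta I(t-1), \qquad \Delta I(t) = \beta\,\frac{S(t)}{N}\,I(t), \qquad S(t+1) = S(t) - \Delta I(t),
\]
where $\Delta I(t)$ denotes \emph{new} infections at time $t$, $\beta$ is the transmission rate, and $\gamma$ is the per-step removal probability; the removed class is then just $R(t) = N - S(t) - I(t)$.

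The key step is to unfold the linear recursion for $I(t)$ into a convolution against past incidence: iterating $I(t) = (1-\gamma)I(t-1) + \Delta I(t-1)$ gives $I(t) = \sum_{\tau \ge 1}(1-\gamma)^{\tau-1}\,\Delta I(t-\tau)$, which is just the statement that someone infected $\tau$ steps ago is still infectious with probability $(1-\gamma)^{\tau-1}$. Substituting this identity into the SIR incidence equation yields
\[
\Delta I(t) = \frac{S(t)}{N}\sum_{\tau\ge 1}\beta(1-\gamma)^{\tau-1}\,\Delta I(t-\tau),
\]
which is precisely Equation~\ref{eqn:main_infec} with $\alpha'(\tau) = \beta(1-\gamma)^{\tau-1}$. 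To match the intended semantics --- $\alpha$ a probability mass function and $\alpha'(\tau) = \mathcal{R}_0\,\alpha(\tau)$ --- I would set $\alpha(\tau) = \gamma(1-\gamma)^{\tau-1}$ for $\tau\ge 1$, the geometric distribution (the discrete analogue of an exponentially distributed infectious period); summing the geometric series gives $\sum_{\tau\ge1}\alpha(\tau) = 1$ and forces $\mathcal{R}_0 = \beta/\gamma$, exactly the textbook basic reproduction number of the SIR model. Since the susceptible update $S(t+1) = S(t) - \Delta I(t)$ is common to both formulations, identical incidence trajectories force identical $S$, $I$, and $R$ trajectories, which completes the argument. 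I would also note that running the same computation in continuous time replaces the geometric kernel by $\beta e^{-\gamma\tau}$, so the classical ODE SIR model is recovered in precisely the same way.

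The manipulations involved are elementary geometric-series algebra, so I do not anticipate a substantive obstacle; the only thing requiring genuine care is the discretization bookkeeping --- making sure the exponent in the survival factor $(1-\gamma)^{\tau-1}$ is indexed consistently with the convention for when new infections enter and leave the active pool, so that the convolution kernel reproduces the SIR removal mechanism with no off-by-one error. A secondary point worth stating explicitly is that the construction uses only this one particular $\alpha$ (the geometric one), which is exactly what makes SIR a \emph{special} case: a general $\alpha(\tau)$ in Equation~\ref{eqn:main_infec} corresponds to non-exponential (e.g.\ gamma-distributed) infectious periods that the plain SIR model cannot represent.
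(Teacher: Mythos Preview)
Your proposal is correct and follows essentially the same route as the paper: express the active-infection count as a geometric survival-weighted sum of past incidence and substitute into the SIR new-infection equation to obtain $\alpha'(\tau)=\beta(1-\gamma)^{\tau-1}$ (the paper's version has exponent $\tau$ due to a slightly different discretization convention, exactly the bookkeeping you flag). Your additional identification of $\alpha$ as the geometric serial-interval distribution with $\mathcal{R}_0=\beta/\gamma$, and the continuous-time remark, go a bit beyond the paper but are consistent with it.
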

\begin{proof}
Suppose $\mathscr{I}(t)$ represents the number of active infections and $\mathscr{R}(t)$ represents the number of recovered at time $t$ in the SIR model. Then in the general model, total cumulative infections are the sum of active and recovered infections, i.e., $I(t) = \mathscr{I}(t) + \mathscr{R}(t)$. Therefore, new infections are the sum of new active infections and new recoveries.
\begin{equation}\label{eqn:active_cumu}
\Delta I(t) = \Delta\mathscr{I}(t) + \Delta \mathscr{R}(t) = \left( \beta \frac{S(t)}{N}\mathscr{I} - \Delta \mathscr{R}(t) \right) + \Delta \mathscr{R}(t) =  \beta \frac{S(t)}{N}\mathscr{I}(t).
\end{equation}
Suppose $\chi$ is the recovery rate in the SIR model such that $\Delta \mathscr{R}(t) = \chi \mathscr{I}(t)$. Also, the number of active infections at time $t$ is the sum of all new infections in the past that have not recovered by the time $t$. Therefore, under the SIR model:
\begin{align}
  \mathscr{I}(t) &= \sum_{t'= 0}^t \Delta I(t') P(\mbox{infected at $t'$ not recovered until $t$}) \nonumber \\
    &= \sum_{t'= 0}^t \Delta I(t') (1-\chi)^{t-t'} = \sum_{\tau = 0}^t \Delta I(t-\tau) (1-\chi)^{\tau}\,.
\end{align}
Putting this value in Equation~\ref{eqn:active_cumu}
\begin{equation}
    \Delta I(t) = \beta \frac{S(t)}{N}\mathscr{I}(t) = \beta \frac{S(t)}{N}\sum_{\tau = 0}^t \Delta I(t-\tau) (1-\chi)^{\tau} = \frac{S(t)}{N}\sum_{\tau = 0}^t \left(\beta (1-\chi)^{\tau}\right) \Delta I(t-\tau) \,.
\end{equation}
Setting $\alpha'(\tau) = \beta (1-\chi)^{\tau}$, this is equivalent to Equation~\ref{eqn:main_infec}.
\end{proof}

\begin{figure}
    \centering
    \includegraphics[width = 0.6\textwidth]{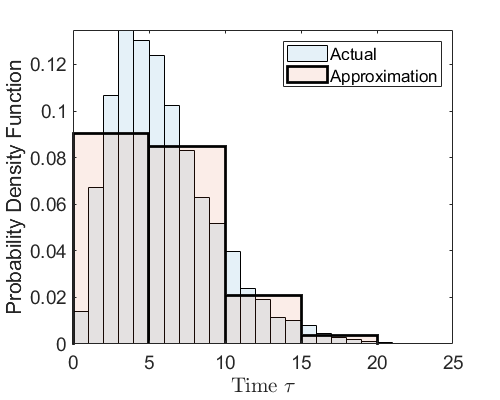}
    \caption{Demonstration of how the SIkJalpha approximates the distribution of the serial interval with $k=4$ and $J=5$.}
    \label{fig:serial_approx}
\end{figure}

We approximate the dependence on past infections by $k$ temporal bins of size $J$ as shown in Figure~\ref{fig:serial_approx}. Mathematically,
\begin{equation}\label{eqn:gen}
\Delta I(t) = \frac{S(t)}{N}\sum_\tau \beta'(\tau)\Delta I(t-\tau)\approx \frac{S(t)}{N}\sum_{i=1}^k \beta'_i \left(I\left(t-(i-1)J\right)-I(t-iJ)\right)\,.
\end{equation}
This enables the representation of arbitrary dependence with few learnable parameters. Further, in real-world setting, obtaining reliable daily data can be infeasible and may induce noise due to reporting patterns. Having $J>1$ has a smoothing effect. 

We note that only reported cases are observed, that can be represented by $\Delta R(t) = \sum_\tau \gamma(\tau)\Delta I(t-\tau)$, which when combined with Equation~\ref{eqn:gen} results in a further approximation:
\begin{equation}\label{eqn:report}
\Delta R(t)= \frac{S_t}{N} \sum_{i=1}^k \beta_i \left(R\left(t-(i-1)J\right)-R(t-iJ)\right)\,.
\end{equation}
Deaths (and hospitalizations) are considered to be outcomes of reported cases:
\begin{equation}\label{eqn:deaths}
    \Delta D_t = \sum_{i=1}^{k_D} \theta_i (R_{t-(i-1)J_D} -R_{t-iJ_D}).
\end{equation}
The assumption of independence of deaths and hospitalization enables independent learning of the parameters. Theoretically, the transition from cases to hospitalizations, and that from hospitalizations to deaths are conditionally independent, so they can be seen as a single transition from cases to deaths. However, if the data on cases is of poor quality, it will affect the prediction of deaths. In that scenario, it is better to estimate deaths from hospitalizations rather than cases in Equation~\ref{eqn:deaths}. The lag between the reported hospitalizations and reported deaths gives a portion of a clearer signal that is a better predictor of deaths.

The ``alpha'' in SIkJalpha represents the learning technique. We use a weighted linear regression to emphasize recently seen data to adapt for dynamically changing behavior. Suppose we have observed the reported cases for T time steps. Then the regression is performed by minimizing the weighted least square error
\begin{equation}
LSE = \sum_{t=1}^T \alpha^{T-t} \left( \Delta R_{obs}(t) - \Delta R_{fit}(t)\right)^2\,.
\end{equation}

Next, we present how various features were included and evolved over time. For a summary of changes over time across the Hubs, please see Table~\ref{tab:all_models}.

\subsection{True Infections}

Since not all infections are reported, we used several strategies over time to estimate the true number of infections. Our objective is to determine true new infections at a given time. 

\textbf{Estimated. } We assume that there exists a reporting probability $\gamma$ for each region, such that cumulative reported cases at time $t$ is given by $R(t) = \gamma I(t-t_0)$ for some constant $t_0$.
In our prior work~\cite{srivastava_data-driven_2020}, we showed that it is often not possible to reliably estimate under-reporting even in simple models. Whenever it is possible (certain periods and certain regions), we use the methods described in~\cite{srivastava_data-driven_2020}. Then we assume that similar reporting rates are observed in all other regions. We also assume that this reporting rate is constant in the past and will remain constant in the future. For short-term forecasts, in the early phases of the pandemic, the choice of this parameter did not impact the performance~\cite{srivastava2021epibench}.

\textbf{Seroprevalence. } Wherever available (for US states),  we use seroprevalence data~\cite{noauthor_nationwide_nodate} that provides an estimated total number of individuals who had infection-induced immunity at the frequency of once per 2-3 months. Each of these points provides an estimate of reporting probability $\gamma(t)$. We use linear interpolation to fill in the missing days within each of these 2-3 month periods. 
Finally, we take a weighted average to find the average reporting probability:
\begin{equation}
    \gamma = \frac{\sum_t \gamma(t) \Delta R(t)}{R(T)},
\end{equation}
where $T$ is the length of the time series of reported cases. The seroprevalence dataset also provides an interval for the percentage population with infection-induced immunity using which we get an interval $(\gamma_{low}, \gamma_{high})$ of reporting probability. 
Note that this assumes a constant reporting rate in the past, and we use the same rate to project in the future. While a constant reporting rate in the past may not be an accurate reflection of the truth, we consider it a reasonable modeling choice for forecasting and scenario projection as it still provides a good estimate of total infections.
Also, note that the number of individuals with infection-induced immunity is less than the number of total infections as one individual may have multiple infections. However, before the rise of the omicron variant, the number of reinfections was orders of magnitude smaller~\cite{NY_breakthrough} than uncertainty in true infections introduced due to the range $(\gamma_{low}, \gamma_{high})$. For the regions where seroprevalence data was not available we used similar range for $(\gamma_{low}, \gamma, \gamma_{high})$.

\textbf{Wastewater. } Due to the low frequency of updates in seroprevalence data and increase in reinfections, we switched to a wastewater dataset in early 2022 for US states~\cite{biobot}. Instead of a constant reporting rate assumption, we use the following model
\begin{align}
    \Delta I(t) &\propto C(t) = c_0 C(t)\, \\
    \gamma(t)  &= I(t)/R(t) \label{eqn:ww_past_reporting} \\
    \gamma(t) &= \gamma(T) \forall t > T\,, \label{eqn:ww_future_reporting}
\end{align}
where $T$ is the last observed time and $C(t)$ is the average effective concentration at time $t$. We also apply a cubic spline smoothing~\cite{reinsch1967smoothing} on $\gamma(t)$.
To estimate $c_0$, we assume that seroprevalence data between August 2020 ($t_1$) to July 2021 ($t_2$) provides a good estimate of incident infections. Then,
\begin{equation}
    c_0 = \frac{\sum_{t=t_1}^{t_2} \Delta I(t)}{\sum_{t=t_1}^{t_2} C(t)}\,.
\end{equation}
This enables estimation of $\Delta I(t)$ over the entire period. While this results in variable reporting probability in the past (Equation~\ref{eqn:ww_past_reporting}), for forecasting and projections, we still use a fixed reporting rate based on most recent rate (Equation~\ref{eqn:ww_future_reporting}).

\subsection{Variants}

During the emergence of the Alpha variant, we incorporated it into the model. Our general approach is to split the cases into multiple time series representing the cases caused by each variant.
While genomic sampling data is available that can provide proportions over time, a major challenge is that the sampling data is updated with a lag of 1-2 weeks~\cite{covariants}. Instead of using the variants proportions directly, we fit a model to identify the proportions as a function of time. This enables us to fill in for the days with missing data and removes the noise due to sampling. Specifically, we use the last 2-3 months of data to fit a logistic model, so that:
\begin{align}
    \log \frac{\Delta I^{(i)}(t)}{\Delta I^{(1)}(t)} = s_it + c_i \,,
    \sum_i \Delta I^{(i)}(t) = \Delta I(t)\,,
\end{align}
where $\Delta I^{(i)}$ represents the infections created by the variant $i$.
We may interpret $s_i$ as the advantage in the growth rate of variant $i$ over the base variant $1$. The model assumes a constant growth rate advantage within the short period under consideration.
In the long-term, the variants dynamics can significantly deviate from the logistic model. However, in the short term, it can be shown that the deviation is small. A detailed analysis of the errors due to this violation is beyond the scope of this paper. Various implementations were used over time:

\textbf{2-strain. } In this model, we only focus on two strains - two most dominant strains, or the current dominant strain, and one strain known to be rapidly increasing in prevalence. E.g., wild-type vs Alpha, and non-Omicron vs Omicron. A disadvantage of this model is that if there are more than two variants with significantly different transmission rates, then forcing them into two groups may lead to variable growth advantage deviating from the logistic model, even in the short term. Another drawback is the human intervention requirement -- two strains need to be manually selected and updated when a new highly transmissible variant appears.

\textbf{All PANGO. } In this model, we use all the variants that appear as different PANGO lineages~\cite{pango}. This approach does not require manual selection of variants and automatically includes new variants that appear as the dataset~\cite{genbank,gisaid} is updated.
However, this approach also has some drawbacks as the number of different lineages active in a 3-month window can be of the order of a few 100s. As a result, the computation time increases. Also, distributing the samples across so many lineages results in higher uncertainty due to the small number of samples per lineage.

\textbf{Selected. } In this model, we take the list of all PANGO lineages, and filter it to select those of interest. For example, during the Omicron wave, we extracted all lineages with names starting with ``BA''. All other lineages that appeared before omicron are grouped into one variant called ``pre-omicron''.

\subsection{Vaccines and Waning Immunity}
Initially, a simple \textit{all-or-nothing model} model was used, where the acquired immunity (through vaccines or prior infection) against variant $i$ provides full protection to $(1-\epsilon_A(i, j))$ fraction of the total population $N_j$ in age group $j$ at time $t$, and no protection to the rest. As a result, the susceptible population is given by 
\begin{equation}\label{eqn:vacc_AoN}
    S_{j}(t) = N_j - I_{j}(t) - \epsilon_A(i, j)V(t)\left(1 - I_{j}(t)/N_j\right)\,.
\end{equation}
This assumes that infection provides full immunity and those who are vaccinated but not infected are susceptible with probability $\epsilon_A(i, j)$.
The value of the parameter $\epsilon_A(i, j)$ was borrowed from the vaccine trials~\cite{oliver2020advisory,SMH}. For simplicity, we assumed $\epsilon_A(i, j) = \epsilon_A(j) \forall i $.
However, since July 2021, a \textit{time-waning model} was used in accordance with the discussions with the Scenario Modeling Hub. In this model, immunity is lost with time based on some function $w_{j}^{(i)}(\tau)$, where $\tau$ is the time since vaccination/infection. The expected susceptible population is given by
\begin{equation}\label{eqn:vacc_time}
    S_{j}(t) \approx N_j - I_{j}(t) - V(t)\left(1 - \frac{I_{j}(t)}{N_j}\right) + \left(\sum_\tau w_{j}(\tau)\left(\Delta V_j(t-\tau) + \Delta I_j(t-\tau)\right) \right)\,.
\end{equation}
From Aug 2021 - Oct 2021, the function $w_{j}^{(i)}$ was assumed to be a function representing the expected susceptibility based on the following model: (i) An individual transitions to a ``partially susceptible state at a time given by exponential distribution with mean $\lambda$. (ii) In this state, the individual has a residual protection $p_{inf}$ against infection and $p_{death}$ and $p_{hosp}$ against death and hospitalization, respectively. This means that they will only be susceptible with probability $1-p_{inf}$.
The parameters were determined by the scenarios in Scenario Modeling Hub. Since then, we have switched to using a gamma distribution to represent the transition into the partially susceptible state. The gamma distribution requires two parameters to be specified. These are obtained from two equations: 1) The mean/median of transition probability should match the specification of the scenarios. 2) The expected immunity after two months should match the vaccine trial results obtained at two months.

Note that both Equations~\ref{eqn:vacc_AoN} and~\ref{eqn:vacc_time}, are approximations in that they only support one breakthrough of acquired immunity and that total infections is roughly equal to the total individuals infected. With increasing reinfections, this approximation does not hold, and a more sophisticated model is used later (see Section~\ref{sec:all_state}).

\subsection{Vaccine Adoption}

During the early phase of vaccinations, from Jan 2021 to July 2021, a linear approximation was used for short-term forecasts (up to 4-weeks ahead). We use linear extrapolation determined by the last two weeks of new vaccinations to get the future new vaccinations. For longer-term projections, a sigmoid model was used:
\begin{equation}
    V(t) = \frac{a}{1 + \exp(-b(t-c))}\,
\end{equation}
where $b$ and $c$ are learnable parameters based on early observed data. The parameter $a$ is fixed based on vaccine adoption surveys.

When a few month of data was available, we modeled vaccine adoption as a contagious behavior:
\begin{align}\label{eqn:vacc-train}
    \Delta V(t) &= \frac{E_V(t) - V_E(t)}{N}\sum_{i=1}^k \beta_i \left(V(t-(i-1)J) -V(t-iJ)\right)
\end{align}
where $V(t)$ is the cumulative number of people vaccinated by time $t$, $E_V(t)$ is the number of individuals eligible for the vaccine at $t$, and $V_E(t) \leq V(t)$ is the number of individuals who are eligible at time $t$ but are already vaccinated. The above model may be applied to different age groups independently. For the first dose, everyone may be eligible in a given age group. We first predict the future-first doses. This determines the eligibility for the second dose, which in turn determines the eligibility for the booster. In the implementations so far, $V_E(t) = V(t)$ as the eligibility is cumulative - someone who is eligible at $t$ will remain eligible at $t' > t$.

The hyperparameters are set to $k=2, J=7$. The parameters $\beta_i$ are estimated using a weighted regression similar to the one described in Section~\ref{sec:basic} with a weight of 0.9.

To simulate target adoption scenarios, such as targeting a fraction $u$ of eligible population, a scaling function $\alpha(t)$ is introduced, after estimating $\beta_i$ in Equation~\ref{eqn:vacc-train}.  

\begin{align}
       \Delta V(t) &= \frac{uE_V(t) - V_E(t)}{N} \sum_{i=1}^k \alpha(t)\beta_i \left(V(t-(i-1)J) -V(t-iJ)\right).
\end{align}
We pick $\alpha(t) = \exp(\frac{H}{H-t})$, where $H$ is the time before which the target adoption is to be reached. The impact of this function is that it smoothly and rapidly increases the rate with time, thus making the term $\frac{uE_V(t) - V_E(t)}{N}$ decisive in the saturation of adoption.

\subsection{Age-groups}
\label{sec:contact}
When age-specific targets (cases, deaths, and hospitalizations) are available, we incorporate interactions between age-groups in a ``contact matrix''. We assume different age groups may have different susceptibility and there is a transmission rate that is scaled by the contact matrix $C=[c_{g,g'}]$, where $c_{g, g'}$ represents a relative measure of contact rates between age groups $g$ and $g'$. 
In other words, $c_{g, g'}$ represents the impact of age-group $g'$ on $g$. For $G$ age groups, this may introduce $G(G-1)/2$ parameters if we assume symmetric impacts.
To avoid overfitting, we reduce this to $G$ parameters, assuming that for some parameters $c_1, c_2, \dots, c_G$, $c_{g, g'} = c_g {c_g'}$.   
One way to interpret this is to assume that age-group $g$ independently participates in a randomly selected contact with probability $\propto c_g$. Therefore, probability of an individual in age group $g$ being in contact with one in $g'$ is proportional to $c_g c_{g'}$. This is similar to the formulation of the contact matrix in Ma et. al.~\cite{ma2021modeling}.

Since contact is a physical process, it is assumed to be the same for all variants. We rewrite Equation~\ref{eqn:report} to incorporate age-groups as:
\begin{equation}\label{eqn:report_final}
    \Delta R^{(i)}(t, g) = \frac{S^{(i)}(t, g)}{N(t, g)} \sum_{l=1}^k \sum_{g'=1}^{G} \beta_l^{(i)}c_g c_{g'} \left(R^{(i)}(t-(l-1)J, g') - R^{(i)}(t-lJ, g')\right)\,.
\end{equation}
To fit the model to the data, first, we fix the variant $i$ by choosing the one which has the most cumulative cases in the recent weeks (in our implementation, we used the last four weeks). Then, we set $c_G = 1$, for the last age group, and learn $\beta_l^{(i)}$ and $c_1, \dots, c_{G-1}$. Then, we fix the learned values of $c_g$ and learn $\beta_l^{(i)}\, \forall l, i$. The purpose of setting $c_G = 1$ is to avoid multiple equivalent solutions obtained by arbitrarily scaling up all $c_g$ and scaling down $\beta_l^{(i)}$.

As an example, the following matrices are obtained from this method based on data as of June 6, 2021 for California and Washington, respectively.
\begin{center}
\begin{tabular}{ccc}
$ \begin{bmatrix}
0.88 & 0.88 & 1.15 & 2.13 & 0.94 \\
0.88 & 0.89 & 1.16 & 2.15 & 0.94 \\
1.15 & 1.16 & 1.51 & 2.80 & 1.23 \\
2.13 & 2.15 & 2.80 & 5.19 & 2.28 \\
0.94 & 0.94 & 1.23 & 2.28 & 1.00 
\end{bmatrix}  $
     &
     \hspace{1cm}
     &  
     $ \begin{bmatrix}
 1.02 & 1.01 & 1.32 & 2.40 & 1.01 \\
 1.01 & 1.01 & 1.32 & 2.40 & 1.01 \\
 1.32 & 1.32 & 1.73 & 3.13 & 1.31 \\
 2.40 & 2.40 & 3.13 & 5.66 & 2.38 \\
 1.01 & 1.01 & 1.31 & 2.38 & 1.00 
\end{bmatrix}  $
\end{tabular}
\end{center}
The age groups considered here are 0-4, 5-11, 12-17, 18-65, 65+. The entry in the $i^{th}$ row and $j^{th}$ column represent the relative contact rate between age groups $i$ and $j$. The matrix is normalized so that the contact rate within the oldest age group is 1.
Figure~\ref{fig:age_fit} shows the fitted time-series by age-group using these matrices. Since the contact matrix and the infection rates vary over time, these fits are generated by providing a higher weight to the more recently seen data. Therefore, the method shows a better fit towards the end of the calibration period.

\begin{figure}
    \centering
    \includegraphics[width=\textwidth]{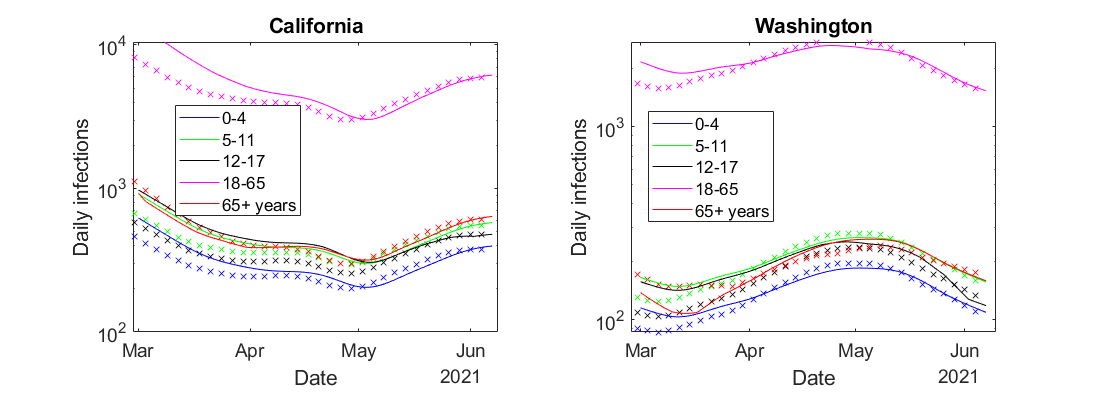}
    \caption{Result of fitting age-specific infection trends by learning contact matrix and infection rates. The markers indicate the pre-processed age-wise data and the solid lines indicate the fitted values. More weight is given to the end of the calibration period to identify recent rates.}
    \label{fig:age_fit}
\end{figure}
\subsection{Immune Escape Properties}
To incorporate immune escape variants like Omicron, a ``cross-protection'' matrix $C_P$ was introduced, such that $C_P(i, i')$ represents the protection against variant $i$ due to an infection by variant $i'$. This is the protection \textit{before} waning of immunity takes place. As a result, each variant has its own view of the population. In the absence of vaccines and waning immunity, susceptibility to variant $i$ will be given by:
\begin{equation}
    S^{(i)}(t) = \sum_{i'} \sum_{t'} (1-C_P(i, i')) I^{(i')}(t)
\end{equation}
The matrix $C_P$ is populated based on the scenario requirements. For instance, if Omicron is required to have an immune escape of 80\%, then
$C_{i, i'} = 1-0.8=2,$ where $i$ is a lineage of Omicron and $i'$ is a lineage of past variants. Otherwise, $C_{i, i'} = 1$.
Additionally, a vector $\mathbf{r}_{VE(j)}$ is introduced such that the $i^{th}$ element represents the initial protection against variant $i$ from the vaccine type $j$. In absence of any waning immunity and infection,
\begin{equation}
    S^{(i)}(t) = \sum_j (1-\mathbf{r}_{VE(j)}(i)) V_j(t)
\end{equation}
where $j$ is a vaccine type (first dose, 2nd dose, booster, ...) and $V_j(t)$ is the total individuals with vaccine type $j$.
Exact computation of susceptibility across variants in presence of waning immunity requires keeping track of various states and is discussed in Section~\ref{sec:all_state}.

\subsection{All-state Model}\label{sec:all_state}
\begin{figure}[!htbp]
    \centering
        \includegraphics[width=\textwidth]{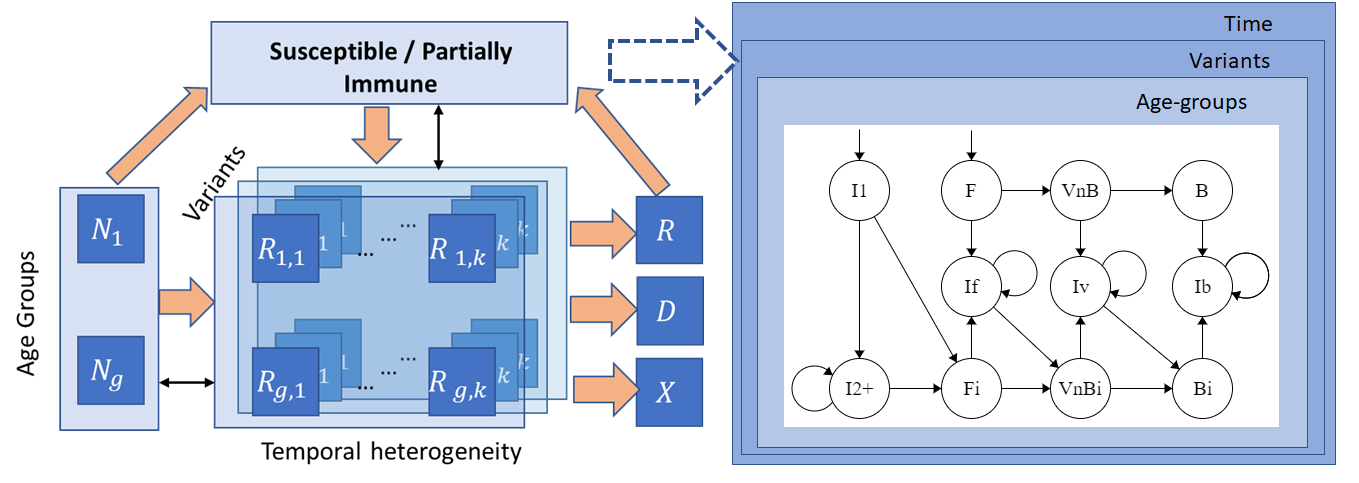}
  \caption{The ``all-state" version of the SIkJalpha model. The transmission is determined by those who were reported infected $1, \dots, k$ time-steps ago in groups $1,\dots, g$ (denoted by $R_{i, j}$) to create the new reported cases $R$. The values of $R_{i, j}$ also determine the new deaths $D$ and any other target ($X$) such as hospitalizations. The states on the right indicate various susceptibility states representing the combination of prior infection and vaccines, time of these events, variant that infected them, and their age group.}
  \label{fig:SIkJa}    
\end{figure}

To incorporate multiple reinfections, and various susceptibility groups that arise due to multiple rounds of vaccine, we define the following types of state: first-time infections ($I_1$), two or more infections ($I_2+$), first dose ($F$), infection with a prior first dose ($I_f$), first dose after a prior infection ($F_i$), second dose before boosting ($VnB$), infected with a prior second dose ($I_v$), second dose after a prior infection ($VnB_i$), booster dose ($B$), infected with a prior booster dose ($I_b$), and booster dose after a prior infection ($F_i$). Figure~\ref{fig:SIkJa} shows the allowed transitions between states. Let state $0$ denote those who are not in any of the above states (naive state). This is the initial state of everyone in the population before they get their first infection or vaccine.

The state of an individual is determined by the $4$-tuple $(j, i, g, t)$ determined by the state-type $j$, age group $g$, time $t$ and variant $i$.
Let $N(j, t, i, g)$ represent the number of individuals in state $(j, i, g, t)$. The association with the variant index represents the last variant that affected that individual. For instance, $N_{I2+}(g, i, t)$ would represent that the last transition they made on or before time $t$ was to a reinfection state due to variant $i$. Those in the vaccination-only state, i.e., $F$, $VnB$, and $B$ exist for all variants, i.e., $N_j(g, i, t) = N_j(g, i', t) \forall i, i'$ when $j \in \{F, VnB, B\}$.
There are at most two ways of transitioning from a state either due to an infection or due to vaccination. 

\textbf{Infection-induced transition. } At $t=0$, $N_j(g, i, t) = 0, \forall j,g,i,t$. Given how immunity wanes over time, we can compute the expected number of susceptible individuals in each state $(j,g, i', t')$ as perceived by each variant $i$ at time $t$.
\begin{align}
S_{i, t}(j, g, i', t') &= N(j, g, i', t'), j\in \{0\} \label{eqn:naive}\\
S_{i, t}(j, g, i', t') &= N(j, g, i', t')\bigg(1 - \mathbf{r}_{VE(j)}(i)\left(1-w_j(i, t-t')\right) \bigg)\,, j\in \{F, VnB, B\} \label{eqn:vax} \\
S_{i, t}(j, g, i', t') &= N(j, g, i', t')\bigg(1 - C_p(i, i')\left(1-w_j(i, t-t')\right) \bigg)\,, j \in \{I1, I2+\} \label{eqn:infec}\\
S_{i, t}(j, g, i', t') &= N(j, g, i', t')\bigg(1 - \max\{C_p(i, i'), \mathbf{r}_{VE(j)}(i)\}\left(1-w_j(i, t-t')\right) \bigg) \,, \nonumber\\
& j \in \{I_f, I_v, I_b, F_i, VnB_i, B_i\} \label{eqn:vax_infec}
\end{align}

In the naive state, everyone is susceptible (Equations~\ref{eqn:naive}). In a state-type with vaccination but no infection, probability of immunity is given by the initial vaccine efficacy and lack of waning immunity (Equations~\ref{eqn:vax}). In a state-type with prior infection but no vaccine probability of immunity is given by cross-protection and lack of waning immunity (Equations~\ref{eqn:infec}). In a state-type with prior infection and vaccination the probability of immunity is given by the maximum of cross-protection and the initial vaccine efficacy, if the protection has not waned (Equations~\ref{eqn:vax_infec}). 

Let $\Delta I^{(i)}(t, g)$ be the new infections at time $t$ in age group $g$ due to variant $i$. Then the new infections due to this variant in this age-group that will transition from state $(j, g, i, t)$ can be estimated as 
\begin{equation}
    \Delta_{I} N_t(j, g, i', t') = \Delta I^{(i)} (t, g) \frac{S_{i, t}(j, g, i', t')}{\sum_{j', t'}S_{i, t}(j', g, i', t')}
\end{equation}
where $\Delta_{I}$ represents the change due to infections at time $t$.

\textbf{Vaccination-induced transition. } 
In infection induced transitions we calculated susceptibility in each state to identify where the transitions are coming from. Similarly, in vaccine induced transitions we need to compute eligibility to identify where the transitions are coming from. The eligible population to go through vaccine-induced transition from the state $(j', g, i, t)$ due to vaccine type determined by $j$ (i.e., $F, VnB, B$), at time $t$ is given by:
\begin{align}\label{eqn:vax_induced_vax}
    E_{j,t}(j', g, i, t') &= N_t(j', g, i, t'), \mbox{ s.t., } j'\rightarrow j,\, t'< t-\delta(j' \rightarrow j), j\in\{F, VnB, B\}
\end{align}
where,  $j'\rightarrow j$ represents a valid transition, and $\delta(j' \rightarrow j)$ denotes the minimum delay allowed between the transitions. For instance, $\delta(j' \rightarrow j)$ may be set to six months for transitioning from 2nd dose only state ($VnB$) to the ``boosted without prior infections'' state $B$. Let 
$V_j(t, g)$ denote the number of vaccinations of type $j$ (first dose, second dose, or booster) given at time $t$ to the age-group $g$, then the
population in state $(j', g, i, t')$ transitioning out induced by vaccinating $V_j(t, g)$ individuals at time $t$ is
\begin{equation}
  \Delta_{V} N_t(j', g, i, t')   = V_j(t, g) \frac{E_{j,t}(j', g, i, t')}{\sum_{j', t'} E_{j,t}(j', g, i, t')}
\end{equation}

Finally, the change in number of individuals in each state is given by

\begin{align}
&\mbox{Initially, } \nonumber \\
&    N_0(0, g, i, t') = N(0, g) \forall g, i, t'\\
&    N_0(j, g, i, t') = 0\, \forall j>0, g, i, t' \\
&\mbox{Transitions to state $j$ at time $t$} \nonumber \\
&    N_t(j, g, i, t) = \sum_{i', t', j\rightarrow j'} \Delta_I N_t(j', g, i', t') + \sum_{i', t', j\rightarrow j'} \Delta_V N_t(j', g, i', t') \\
&\mbox{Transitions coming from state $j$ at time $t$} \nonumber \\
&    N_t(j, g, i, t') = N_{t-1}(j, g, i, t') - \sum_{i', t',, j\rightarrow j'} \Delta_I N_t(j', g, i', t') - \sum_{i', t',, j\rightarrow j'} \Delta_V N_t(j', g, i', t')
\end{align}

\begin{align}
    S_j^{(i)}(t, g) &= \sum_{i', t'} S_{i, t}(j, g, i', t')
\end{align}

As before, we assume that all severe outcomes at time $t$ in age-group $g$ ($X_j^{(i)}(t, g)$), including hospitalizations and deaths are simply linear functions of the temporal sequence of infections. 
\begin{equation}\label{eqn:target}
      \Delta X_j^{(i)}(t, g) = \sum_{l=1}^{k_X} \chi_l^{(i)}(g)(1 - P_j^{(i)}(t, g)) \left(I_j^{(i)}(t-(l-1)J, g) - I_j^{(i)}(t-lJ_X, g)\right).
\end{equation}
By this decoupling of the dependency, instead of sequential dependency between cases to hospitalizations to deaths, we are able to independently produce estimate parameters. Here $\chi_l^{(i)}(g)$ are rates of severe outcome for a ``naive'' individual - one who has no protection. We assume that the ratio of severity of any two variants is a constant. Therefore, we set $\chi_l^{(i)}(g) = \rho^{(i)}\chi_l(g)$, where $\rho^{(i)}$ represents relative severity of variant $i$. This parameter makes a significant difference only if there exist variants with significantly different severity in the last few days used for training ($~$100 days). For example, in the US, this parameter was used from December 2021 to March 2022, when the last 100 days were expected to have significant proportions of both omicron and delta variants.
We treat $\rho^{i}$ as a known parameter when provided in the scenarios~\cite{SMH}. Otherwise, we perform a grid search to identify the $\rho^{i}$ that best fits a held-out validation data (new deaths/hospitalizations) of the last week. When all the circulating variants are expected to have the same severity, we set $\rho^{(i)} = 1 \forall i$.

The parameter $P_j^{(i)}(t, g)$ is the protection against severe outcome of an individual in state-type $j$ in age group $g$ at time $t$ from variant $i$. We treat $P_j^{(i)}(t, g)$ as known parameters borrowed from the provided scenarios~\cite{SMH} for the scenario modeling hub. 
In absence of recommended values of protection (for forecast hubs), we use the various values suggested by the scenario modeling hub one at a time as ``sampled'' hyperparameters to generate ``sub-scenarios'' (see Section~\ref{sec:uncertainty}). It should be noted that by 2022, the naive population (no past vaccinations or infections) had become a small fraction of the population as suggested by the seroprevalence data~\cite{noauthor_nationwide_nodate}. Therefore, this parameter has a negligible impact on the projections. Since we do not observe severe outcomes per variant, we perform regression on:

\begin{equation}\label{eqn:target_final}
      \Delta X(t, g) = \sum_{l=1}^{k_X} \chi_l(g) \left(\sum_{i, j} \rho^{(i)}(1 - P_j^{(i)}(t, g)) \left(I_j^{(i)}(t-(l-1)J, g) - I_j^{(i)}(t-lJ_X, g)\right)\right).
\end{equation}
The quantity in the parenthesis is computed before the regression resulting in a linear regression only in $\chi_l(g)$, thus producing extremely fast results free from overfitting common in non-linear settings.

\subsection{Other Capabilities}
In addition to the above-described factors that are accounted for, the model also supports other capabilities as described below.

\textbf{Anomaly detection and Smoothing. } To address the reporting noise and delayed dumps of backlogged cases and deaths, various strategies were used. Until July 2020, only moving average smoothing was used. However, delayed dumps caused large spikes, thus misleading the model. Therefore, an anomaly detection module was added which considered a daily value (case or death) to be an outlier if it exceeded 4 times the median value of the last few ($\sim$70) days. The outlier is then replaced with the linearly interpolated value from the neighboring non-outliers. Then we apply a moving average smoothing.
With changes to reporting schedule (once a week, reporting only on certain weekdays, etc.) the anomaly detection module had to be changed. We first accumulate the reported targets into a weekly time series instead of daily. The anomaly detection is then applied to this weekly time series. Then we convert this time series back into daily time series by distributing the reports in a week equally among its days. Finally, we apply to smooth on this time series.

\textbf{Future Contact Behavior. } Future changes in Non-Pharmaceutical Interventions (NPIs) for scenario modeling are implemented as linear scaling of contact matrix~\ref{sec:contact}. In absence of age stratification, this results in a linear scaling of the transmission rates in the future. In earlier rounds, when NPIs were asked to be lifted linearly, the scaling was changed linearly over time. In rounds where future NPI modeling was left up to the teams, the scaling was based on contact rates of the previous year obtained from Cuebiq~\cite{cuebiq} to account for seasonal contact changes.

\textbf{Arbitrary variant. } Arbitrary variants in the future can be incorporated into the model along with the existing ones. The model requires as input the day of introduction (or distribution over days), the number of introductions, and properties of the new variant (transmissibility and immune escape advantages).

\textbf{More vaccine doses. } While we present the model with three doses (first, second, and booster), the model supports an arbitrary number of doses. This is done by repeating Equations~\ref{eqn:vax}, and~\ref{eqn:vax_induced_vax} to more vaccine doses beyond $\{F, VnB, B\}$.
\begin{table}[!ht]
    \centering
    \begin{tabular}{|p{1.8in}|p{3in}|}
    \hline
         Learnable Parameters (estimated by regression) & Transmission rates vector per variant $i$ $\beta_l^{(i)}$, contact matrix $C$, severe outcome rates $\chi_l(g)$ for deaths and hospitalizations \\
         \hline
         Selected Hyperparameters (selected by grid search) & For cases $k=2, J=7, \alpha \in \{0.9\}$, for deaths $k_D \in \{3,4,\dots,7\}, J_D=7, \alpha_D \in \{0.95, 0.98\}$, for hospitalizations $k_H = \{2, 3\}, J_H = \{2, 3, 7\}, \alpha_H \in \{0.95, 0.98\}$, relative variant severity $\rho^{(i)}$\\
         \hline
         Provided/Sampled Hyperparameters (sampled from a reasonable set) & under-reporting/true cases interval, interval estimate of variant proportions (affecting $\Delta I^{(i)}(t)$), waning immunity parameters $\epsilon_A(i, j), w_j(i, \tau)$, protection against severe disease $P^{(i)}_j(t,g)$, Relative vaccine efficacy per variant per group $\mathbf{r}_{VE(j)}(i)$, cross protection matrix $C_P$, future vaccine coverage $u$\\
         \hline  
    \end{tabular}
    \caption{Categorization of parameters involved in model training.}
    \label{tab:params}
\end{table}

\section{Model Uncertainty}
\label{sec:uncertainty}
Two methods have been used to model uncertainty: sub-scenarios and Random Forest. The latter has only been used for the Germany/Poland Forecast Hub.

\subsection{Sub-scenario Based}
Here, our goal is to generate multiple future trajectories from which we can sample desired quantiles. We divide the parameters involved in the model into three categories (Table~\ref{tab:params}): (i) Learnable parameters include those that are estimated using regression. (ii) Selected hyperparameters are those that are identified via grid search based on a held-out validation set (combination of hyperparameters that best predicts the unseen data), and (iii) Provided/Sampled hyperparameters are those that are either provided by the scenarios or are considered to take one of several different values. Among the provided/sampled hyperparameters that can take several values, each setting corresponds to a ``sub-scenario''. Once we fix a sub-scenario, learnable parameters can be estimated and selected hyperparameters can be identified by grid-search, leading to one trajectory. We form all possible combinations of sub-scenarios thus typically getting 50-300 trajectories. The desired quantiles are then sampled from these trajectories.

\subsection{Random Forest}
In this approach, we first generate the ``mean'' prediction. This is the time-series outcome of regression which we convert to a vector of four elements $(y_1, y_2, y_3, y_4)$. The $i^{th}$ element represents the $i$-week ahead forecast. We assume that the true observation $i$-week ahead is $\bar{y}_i = y_i + e_i$. For each $i$, to estimate $e_i$, we train a Random Forest~\cite{meinshausen_quantile_2006} with 100 trees. The inputs to the Random Forest are current and last week's incident data (cases or deaths), the population of the region, and the prediction from SIkJalpha, $y_i$. The output is the deviation from the true observed data $\bar{y}_i - y_i$. The training is performed by considering recently seen (last 35 days) data, and the difference between prediction and observed data over all the regions. The desired quantiles are then sampled from the Random Forest.

\section{Real-time Prediction}
\label{sec:real-time}
\subsection{Participation in Real-time Prediction Efforts}
Various versions of SIkJalpha model were submitted to US/CDC Forecast Hub (US FH), Europe/ECDC Forecast Hub (Euro FH), Germany/Poland Forecast Hub (G-P FH), US/CDC Scenario Modeling Hub (US SMH), and Europe/ECDC Scenario Modeling Hub (EU SMH). Submissions to the forecast hubs were made every week on Sunday. Scenario Modeling Hubs have submissions once in 1-2 months. The details of targets and modeling choices are presented in Table~\ref{tab:all_models}.

\subsection{Real-time Publicly Available Forecasts}
We have maintained an online dashboard since April 2020~\cite{recover_github} to visualize the generated forecasts for US states and countries around the world. These forecasts are generated using the same modeling choices as mentioned in Table~\ref{tab:all_models} for US FH and Euro FH, respectively.
Since December 2020, we have also added forecasts for all the locations around the world~\cite{google_forecasts} for which Google provides case and death data~\cite{noauthor_covid-19_nodate-2}. This includes 214 Admin-0 locations, around 1000 Admin-1 locations, and around 16,000 Admin-2 locations. This implementation includes modeling choices identical to G-P FH with the exception of the inclusion of 2-dose vaccines which are assumed to provide all-or-nothing protection. When vaccination data is not available from Google for Admin-1 or Admin-2 locations, we assume that the vaccination uptake per unit population is the same as that of the higher Admin-level for which data is available. We do not generate uncertainty bounds for these forecasts.

\subsection{Implementation}
All code was written in MATLAB and is publicly available on Github~\cite{recover_github}. Daily forecasts have been generated through an automated script at 5 a.m. Pacific Time every day since May 2020 on an Intel 2-core \textbf{desktop}. The scenario projections were developed and generated on the same machine as per the submission schedule until September, after which we moved to a Dell PowerEdge R540 \textbf{server}. It has two Intel Xeon Gold 5218 processors each running @2.3G, 16C/32T, 10.4GT/s, 22M Cache (total 32 core/64 threads, 44M Cache). It has 64GB RDIMM, 3200MT/s, Dual Rank RAM, and 1.92TB SSD SATA Read Intensive 6Gbps drive. MATLAB implicitly converts many matrix operations and functions into multi-threading. Starting in September, we also introduced parallelism across sub-scenarios. The runtimes of the latest implementation are presented in Table~\ref{tab:runtimes}. The most time-consuming aspect is the estimation of the susceptible population which requires evaluating all states from the beginning of the epidemic, which in our model is set to Jan 23, 2020. To accelerate the computation, we update the states at a weekly granularity rather than daily.

\begin{table}[!ht]
    \centering
    \begin{tabular}{l|c|c}
    Targets & ``all-state''? & Runtime/subscenario \\
    \hline
        US states / US FH & Yes &  25s on server\\       
        $\sim$190 countries & Yes & $\sim$200s on desktop \\
        Euro FH/SMH & Yes & 15s on server \\
        GP FH & No & $<$ 10s on desktop (incl. quantiles)\\
        $>$17,000 locations (Google) & No & $\sim$400s on desktop
    \end{tabular}
    \caption{Approximate runtimes of various implementations. }
    \label{tab:runtimes}
\end{table}
\section{Evaluations}
\label{sec:eval}

\begin{figure}[!ht]
    \centering
    \begin{subfigure}{0.49\textwidth}
    \includegraphics[width=\textwidth]{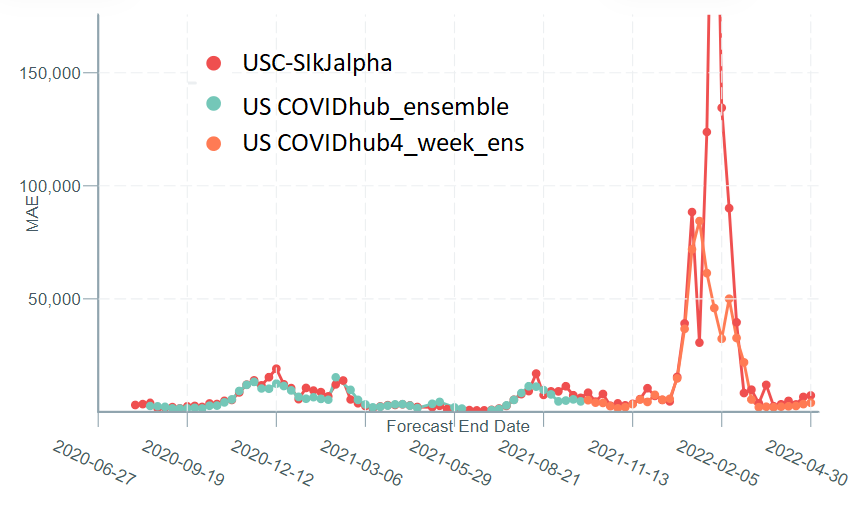}    
    \caption{US Forecast Hub - cases}
    \end{subfigure}
    \begin{subfigure}{0.49\textwidth}
    \includegraphics[width=\textwidth]{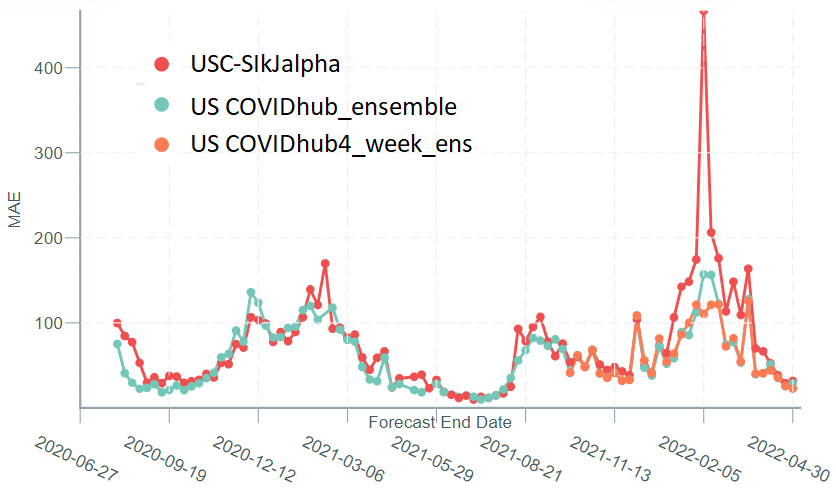}    
    \caption{US Forecast Hub - deaths}
    \end{subfigure}
    \begin{subfigure}{0.49\textwidth}
    \includegraphics[width=\textwidth]{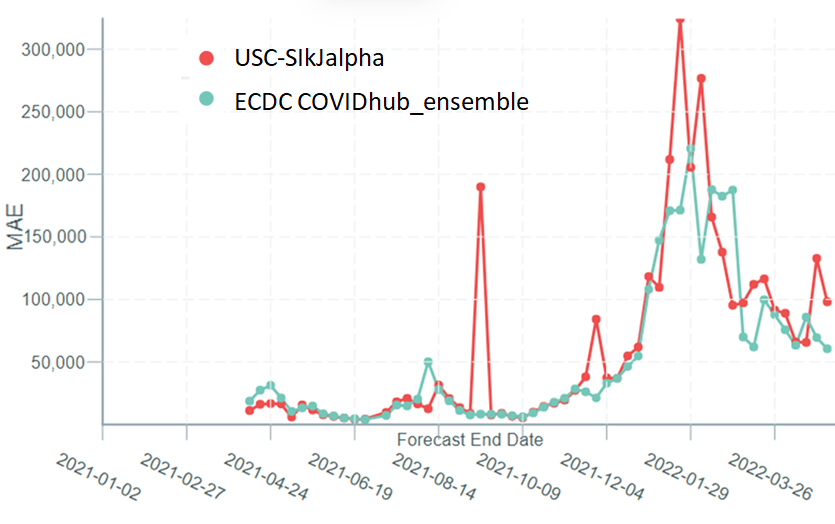}    
    \caption{ECDC Forecast Hub - cases}
    \end{subfigure}
    \begin{subfigure}{0.49\textwidth}
    \includegraphics[width=\textwidth]{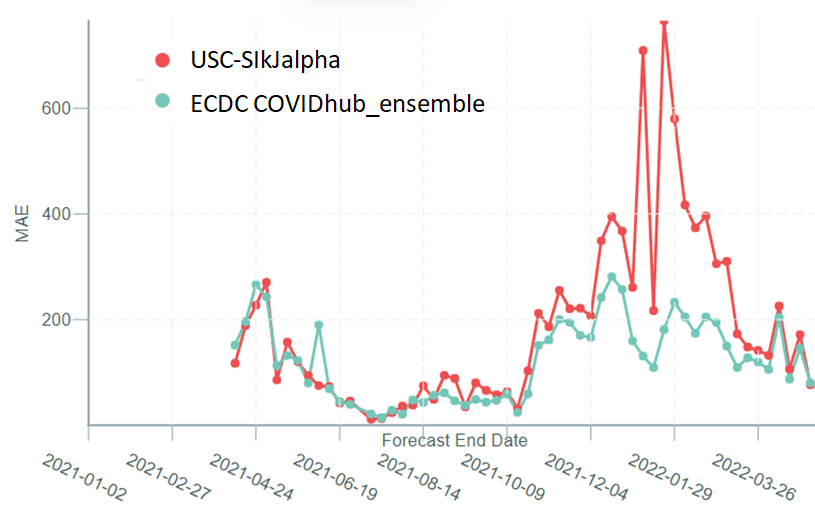}    
    \caption{ECDC Forecast Hub - deaths}
    \end{subfigure}
    \caption{Comparison of our submitted 4-week against Hub ensemble in terms of MAE}
    \label{fig:MAE}
\end{figure}

\begin{figure}[!ht]
    \centering
    \begin{subfigure}{0.49\textwidth}
    \includegraphics[width=\textwidth]{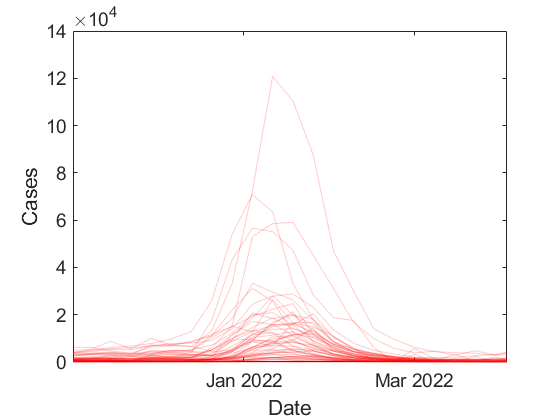}    
    \caption{Cases in the US states}
    \end{subfigure}
    \begin{subfigure}{0.49\textwidth}
    \includegraphics[width=\textwidth]{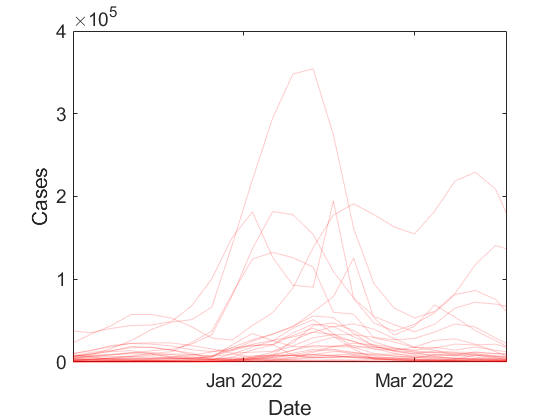}    
    \caption{Cases in the European countries}
    \end{subfigure}
    \caption{Reported cases after applying the same smoothing to both datasets: States in the US have smoother trends than the European countries.}
    \label{fig:omicorn-trends}
\end{figure}

\begin{figure}[!ht]
    \centering
    \includegraphics[width=0.7\textwidth]{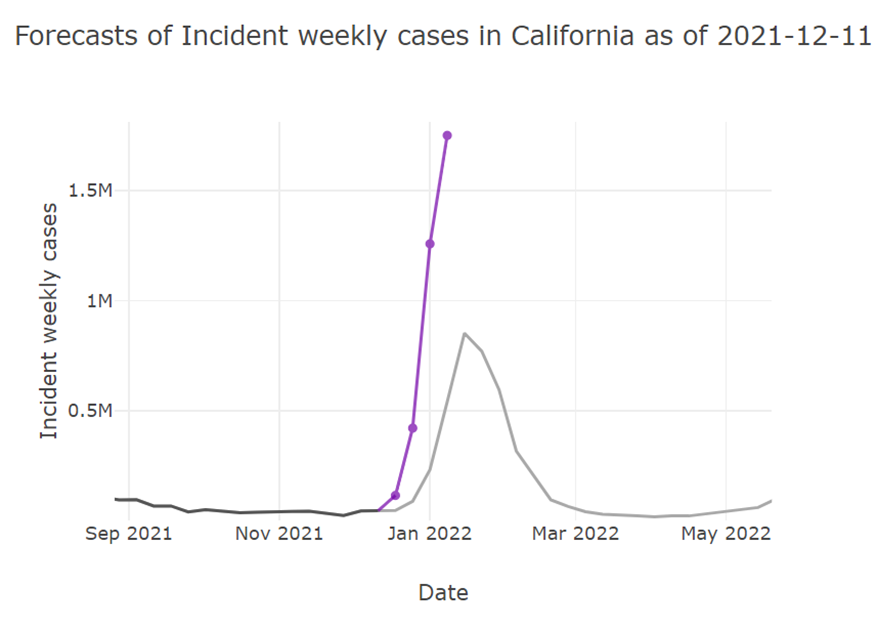}
    \caption{The model often over-predicts just before the surge: Here the purple line is the model prediction for California before the Omicron surge. The grey line is the ground truth. Due to the inclusion of variants-specific data, the model can predict an impending surge, however, it over-predicts the intensity of the surge.}
    \label{fig:overpredicts}
\end{figure}

Our models have been evaluated by us~\cite{srivastava_covid19_2020} as well as several other studies~\cite{bracher_assembling_nodate,cramer2022evaluation,bracher2021pre,friedman2021predictive,howerton2023informing}, reports~\cite{US_eval_report,EU_eval_report,KITeval} and dashboards~\cite{US_eval_dashboard}. Despite the lack of human involvement in generating forecasts (except for debugging and adding new features), our methods have been in the top 25\% during most of the pandemic based on mean absolute error and weighted interval score~\cite{bracher2021evaluating}. For case forecasts in the US, among all models that were submitted consistently in 2020-2021, our submission ranks highest in terms of Weighted Interval Score~~\cite{lopez2023challenges}. For country-level death forecasts over multiple countries, our forecasts performed the best in terms of median absolute percentage error~\cite{friedman2021predictive} among 7 models that fit the inclusion criteria among 386 models analyzed. This analysis included long-term forecasts up to 12 weeks ahead generated in 2020 to mid 2021. 

However, the US hospitalization forecasts have not been good. This is particularly because it was assumed that there is at least one one-week lag between the reported cases and hospitalizations. However, this was not consistently true and in the majority of the weeks, the reported cases were well-aligned (without a lag) with reported cases due to delay in reporting. It was fixed in February of 2022, by removing the enforced one-week lag. This is a drawback of the ``automated'' approach and could have been fixed earlier with more human involvement -- by visualizing the results every week. However, due to the involvement of the lone author in multiple collaborative efforts, this was not feasible and would have required multiple personnel. 

Figure~\ref{fig:MAE} shows the comparison of our submissions and the Hub ensemble in terms of mean absolute error for US and ECDC 4-week ahead case and death forecasts.We observe that until the end of 2021, the performance was close to the Hub ensembles. The case forecasts had poor performance just before and during the rise of the Omicron variant. The errors are higher in the US compared to those in the ECDC. This could be due to sparser state-level data on variants leading to poor estimates of variant proportions. Death forecasts are impacted less in the US, with the exception of one outlier. On the other hand, ECDC death forecasts are impacted more during this period. This may be due to the poorer quality of case reporting during this period in Europe compared to the US states (see Figure~\ref{fig:omicorn-trends}). Since, during this time, our death forecasts were based on a function of reported cases, death forecasts were of relatively poor quality as well.

Our models often correctly predict the existence of an upcoming surge, but they tend to over-predict its intensity (Figure~\ref{fig:overpredicts}). While qualitatively, predicting the existence of a surge is useful, with traditional metrics such as mean absolute error, the models are penalized for overpredicting. How to improve the forecast based on these metrics during a time of variant-drive surge is not clear. Better quality of data on genomic surveillance and case reporting can help with better estimation of variant proportions, resulting in better forecasts, at least in the short term.

\section{Conclusions}

We presented various versions of the SIkJalpha model that have been used over time over different forecast hubs and scenario modeling hubs. The models have evolved into one capable of accounting for various age groups, an arbitrary number of variants, imperfect protection from natural and vaccine-induced immunity, and immunity states formed by multiple vaccine doses and infections.
We have presented the implementation details including data sources, parameter estimation methods, runtimes, and uncertainty quantification. The goal of this paper was to provide an overview of the modeling choices over time. The details of the impact of each modeling decision are beyond the scope of this paper. Instead, we focused on the breadth of modeling decisions. Our forecasts for the forecast hubs were generated through automated scripts. The advantage of our approach is scalability due to automation and fast implementation. That is, we do not perform any manual tuning or analyze individual forecasts for each region. The same advantage has the drawback that the models may produce poor forecasts if there is an anomaly or high noise in data that have not been already accounted for in the scripts. A particular period where the models perform poorly is just before a surge due to a new variant. The model is able to predict the existence of a surge, however, it often over-predicts the intensity. 

In future work, we will perform retrospective analysis to identify which modeling choices lead to better performance. Particularly, how to predict the existence of a surge without over-predicting its intensity. This will include identification of the transmissibility advantage, the immune escape advantage, and better estimates of population immunity. Estimating population immunity will require exact identification of the dynamics of waning immunity, which so far has been borrowed from the suggestions of the Scenario Modeling Hub. 

\section*{Acknowledgments}
This work was supported by the  Centers for Disease Control and Prevention and the National Science Foundation under the awards no. 2027007, 2135784, and 2223933. Any opinions, findings, and conclusions or recommendations expressed in this material are those of the author and do not necessarily reflect the views of the National Science Foundation and the Center for Disease Control and Prevention.
The author would like to thank Tianjuan Xu and Majd Al Aawar for assistance with the submission of forecasts to the forecast hubs. The author would also like to thank the US Scenario Modeling Hub for useful discussions that have helped with the evolution of the models.

\begin{landscape}
\begin{table}[p]
\caption{Modeling choices for various Hub submissions.}
\label{tab:all_models}
\resizebox{1.6\textwidth}{!}{%
\begin{tabular}{|l|c|c|c|c|c|} 
\toprule
& \begin{tabular}[c]{@{}c@{}}US FH\\ (start: 06/20)\end{tabular}  & \begin{tabular}[c]{@{}c@{}}G+P FH\\ (start: 11/20)\end{tabular} & \begin{tabular}[c]{@{}c@{}}Euro FH\\ (start: 02/21)\end{tabular}  & \begin{tabular}[c]{@{}c@{}}US SMH\\ (start: 01/21)\end{tabular}  & \begin{tabular}[c]{@{}c@{}}Euro SMH\\ (start: 03/22)\end{tabular}  \\ 
\midrule
Cases  & Yes  & Yes  & Yes  & Yes  & Yes  \\ 
\midrule
Deaths  & Yes  & Yes  & Yes  & Yes  & Yes  \\ 
\midrule
Hospitalizations  & 11/20-2/23  & -  & -  & Yes  & -  \\ 
\cmidrule{1-4}\cline{5-5}\cmidrule{6-6}
True Infections  & \begin{tabular}[c]{@{}c@{}}Estimated: start-12/20\\ Seroprevalence: 1/21-3/22\\ Wastewater: 3/22-2/23\end{tabular} & Extrapolated  & Extrapolated  & \begin{tabular}[c]{@{}c@{}}Seroprevalence: start-3/22\\ Wastewater: 3/22-2/23\end{tabular}  & Extrapolated  \\ 
\midrule
Vaccine  Adoption  & \begin{tabular}[c]{@{}c@{}}Linear: 1/21 – 7/21\\Contagious: 7/21 – 2/23\end{tabular}  & -  & \begin{tabular}[c]{@{}c@{}}Linear: start – 7/21\\Contagious: 7/21 – 2/23\end{tabular}  & \begin{tabular}[c]{@{}c@{}}Sigmoid: start – 7/21\\Contagious: 7/21 – 2/23\end{tabular}  & \begin{tabular}[c]{@{}c@{}}Contagious:\\ start – 2/23\end{tabular}  \\ 
\midrule
\multirow{2}{*}{Vaccine  Dosage}  & \multirow{2}{*}{\begin{tabular}[c]{@{}c@{}}2-dose: 1/21 – 7/21\\All-doses: 07/21 – 2/23\end{tabular}}  & \multirow{2}{*}{-}  & \multirow{2}{*}{\begin{tabular}[c]{@{}c@{}}2-dose: start – 7/21\\2-dose+ boost: 07/21 – 2/23\end{tabular}} & 2-dose: start – 7/21  & \multirow{2}{*}{\begin{tabular}[c]{@{}c@{}}2-dose+booster:\\ start - 2/23\end{tabular}} \\ 
\cmidrule(r){5-5}
&  &  &  & 2-dose+boost: 07/21 – 2/23  &  \\ 
\cmidrule{1-1}\cmidrule(lr){2-2}\cmidrule{3-3}\cmidrule(l){4-4}\cmidrule{5-6}
Vaccine  Model  & \begin{tabular}[c]{@{}c@{}}AoN: 1/21 – 7/21\\Waning: 7/21 – 2/23\end{tabular}  & -  & \begin{tabular}[c]{@{}c@{}}AoN: start – 7/21\\Waning: 7/21 – 2/23\end{tabular}  & \begin{tabular}[c]{@{}c@{}}AoN: start – 7/21\\Waning: 7/21 – 2/23\end{tabular}  & -  \\ 
\cmidrule{1-1}\cmidrule(lr){2-2}\cmidrule{3-3}\cmidrule(lr){4-5}\cmidrule{6-6}
\begin{tabular}[c]{@{}l@{}}Natural Immunity\\ Waning\end{tabular} & 7/21 - 2/23  & -  & 7/21 - 2/23  & 7/21  – 2/23  & start - 2/23  \\ 
\midrule
Variants  & \begin{tabular}[c]{@{}c@{}}2-strain: 03/21 - 07/21\\All PANGO: 07/21 - 12/21\\Selected: 12/21 - 2/23\end{tabular} & -  & \begin{tabular}[c]{@{}c@{}}All PANGO: 07/21 - 12/21\\Selected: 12/21 - 2/23\end{tabular}  & \begin{tabular}[c]{@{}c@{}}2-strain: 03/21 - 07/21\\All PANGO: 07/21 - 12/21\\Selected: 12/21 - 2/23\end{tabular} & start - 2/23  \\ 
\midrule
Immune Escape  & Since  12/21  & -  & Since  12/21  & Since  12/21  & Yes  \\ 
\midrule
Age groups  & -  & -  & -  & Yes  & -  \\ 
\midrule
\begin{tabular}[c]{@{}l@{}} All Susceptibility\\ States\end{tabular}  & Since  12/21  & -  & Since  12/21  & Since  12/21  & Yes  \\ 
\midrule
Quantiles  & \begin{tabular}[c]{@{}c@{}}Sub-scenarios:\\ 11/20-2/23\end{tabular}  & Random Forest  & \begin{tabular}[c]{@{}c@{}}Sub-scenarios:\\ 11/20-2/23\end{tabular}  & \begin{tabular}[c]{@{}c@{}}Sub-scenarios:\\ 1/21-2/23\end{tabular}  & \begin{tabular}[c]{@{}c@{}}Sub-scenarios:\\ start-2/23\end{tabular}  \\
\bottomrule
\end{tabular}
}
\end{table}
\end{landscape}

\bibliographystyle{plain}
\bibliography{refs,epidemic,my_pubs}

\end{document}